\documentclass{llncs}

\usepackage[english]{babel}
\usepackage{url}
\usepackage{enumerate}
\usepackage{wrapfig}
\usepackage{bbm}
\usepackage{flushend}
\usepackage[capitalise]{cleveref}
\usepackage{color}
\usepackage{float}
\usepackage[caption=false]{subfig}
\usepackage{graphicx}
\usepackage{amsfonts}
\usepackage{amssymb}
\usepackage{multirow}
\usepackage{longtable}
\usepackage[colorinlistoftodos]{todonotes}

\usepackage{calc}

\newtheorem{thm}{Theorem}
\newtheorem{cor}[thm]{Corollary}
\newtheorem{lem}[thm]{Lemma}

\hyphenation{ana-ly-sis}

\newcommand{\prob}[1]{\mathbb{P}\left(#1\right)}
\normalsize

\begin{document}

\title{Modelling trend progression through an extension of the Polya Urn Process}
\titlerunning{Trend modelling using Polya Urn Process} 

\author{Marijn ten Thij \and Sandjai Bhulai}

\authorrunning{M. ten Thij \and S. Bhulai} 

\tocauthor{Marijn ten Thij and Sandjai Bhulai}

\institute{Vrije Universiteit Amsterdam, Faculty of Sciences, the Netherlands,\\
\email{\{m.c.ten.thij,s.bhulai\}@vu.nl}
}
\maketitle

\begin{abstract}
Knowing how and when trends are formed is a frequently visited research goal. In our work, we focus on the progression of trends through (social) networks. We use a random graph (RG) model to mimic the progression of a trend through the network. The context of the trend is not included in our model. We show that every state of the RG model maps to a state of the Polya process. We find that the limit of the component size distribution of the RG model shows power-law behaviour. These results are also supported by simulations.
\end{abstract}

\keywords{Retweet graph, Twitter, graph dynamics, random graph model, Polya process} 

\section{Introduction}\label{sec:intro}
How can we reach a large audience with our message? What is the best way to reach a large audience with an advertisement? These are questions that are asked many times in our modern day society. Not only for corporate interest, but also for public interest by governments and charities. Everyone wants to get their message across to a large audience. Finding out how to do this is a frequently visited research goal in many fields, e.g. economics \cite{simon1955class}, evolutionary biology \cite{yule1925mathematical,ewens2012mathematical} and physics \cite{redner1998popular}.

In our work, we focus on the progression of trends through the network of users. Incident to our approach, we focus on the microscopic dynamics of user-to-user interaction to derive the overall behaviour, which is similar to the approach used in other works, e.g. \cite{watts2002simple,gleeson2014simple}. Our work differs from these in that we model the spread of the messages in a step-by-step fashion, whereas \cite{watts2002simple,gleeson2014simple} use a given degree distribution per user as a start of their analysis.

Our goal is to devise a model that mimics the progression of a trend through (social) networks. By doing this, we focus only on the pattern of progression of the trends and not their content. Based on observations from \emph{Twitter} data, we have built a model that captures the different changes that occur in a network whilst a topic is spreading. In \cite{thij2014modelling}, we derived basic growth properties of the model and the speed of convergence of these properties. In this paper, we derive the component size distribution of the model.

In Section~\ref{sec:related}, we describe related fields of study, and in Section~\ref{sec:model} we introduce the RG model and the Polya process. Here we show that the RG model can be easily mapped to the Polya process. Then, in Section~\ref{sec:theory}, we derive the behaviour of the component size distribution. Finally, we state our conclusions and discuss the possibilities for further research in Section~\ref{sec:conclusion}.

\section{Related work}\label{sec:related}
There are many studies that focus on information diffusion in online social networks. In \cite{Guille2013}, Guille et al. provide a survey, in which they distinguish two approaches: a descriptive approach and a predictive approach. A few examples of these approaches are given below.

A large number of descriptive studies into information diffusion have added to the knowledge about how messages progress through online social networks. Lerman and Ghosh \cite{lerman2010information} find that diffusion of information in \emph{Twitter} and \emph{Digg} is very dependent on the network structure. Bhattacharya and Ram \cite{bhattacharya2012sharing} study the diffusion of news items in \emph{Twitter} for several well-known news media and find that these cascades follow a star-like structure. Friggeri et al. \cite{friggeri2014rumor} study rumors in Facebook and find bursty spreading patterns. Sadikov and Martinez \cite{sadikov2009information} find that on \emph{Twitter}, tags tend to travel to distant parts of the network and URLs tend to travel shorter distances. Bhamidi et al. \cite{bhamidi2012twitter} propose and validate a so-called superstar random graph model for a giant component of a retweet graph. Hoang and Lim \cite{hoang2012virality} propose a dynamic model that uses both user and item virality and user susceptability as descriptors to model information diffusion. Iwata et al. \cite{iwata2013discovering} use an inhomogeneous Poisson Process to model the diffusion of item adoption. Another angle to model information diffusion uses epidemic spreading: By a maximum entropy argument Bauckhage et al. \cite{bauckhage2015parameterizing} find a closed form expression for the path length distribution in a network. Finally, Carton et al. \cite{carton2015audience} propose to perform an audience analysis when analysing diffusion processes, thus including not only the diffusers in the analysis but also the receivers of the content.

Romero et al. \cite{romero2011differences} use the predicitive approach to analyse the spread mechanics of content through hashtag use and derive probabilities that users adopt a hashtag. Kupavskii et al. \cite{kupavskii2012prediction} predict the number of retweets based on several features. They find that the flow of a message is one of the most important features in the prediction. Altshuler et al. \cite{altshuler2012trends} use past information and diffusion models to derive a lower bound on the probability that a topic becomes trending. Zaman et al. \cite{zaman2010predicting} predict future retweets based on features at the user level. Wu and Raschid \cite{wu2014prediction} define a user specific potential function which reflects the likelihood of a follower sharing the users content  in the future.

Classification and clustering of trends on \emph{Twitter} has also attracted considerable attention in the literature. Zubiaga et al. \cite{zubiaga2011classifying} derive four different types of trends, using fifteen features to make their distinction. They distinguish trends triggered by news, current events, memes or commemorative tweets. Lehmann et al. \cite{lehmann2012dynamical} study different patterns of hashtag trends in \emph{Twitter}. They also observe four different classes of hashtag trends. Rattanaritnont et al. \cite{rattanaritnontstudy} propose to distinguish topics based on four factors, namely cascade ratio, tweet ratio, time of tweet and patterns in topic-sensitive hashtags. Ferrara et al. \cite{Ferrara2013} cluster memes based on content and user, tweet and network similarity measures.

We use the analysis of urn processes in this paper, in contrast to other works in this area. Pemantle \cite{Pemantle2006} presents a survey of different techniques that are used in this field of research. In this work, we focus on extensions of the Polya urn problem, which is thoroughly analysed by Chung et al. in \cite{chung2003generalizations}. Specifically, we are interested in the infinite generalized Polya urn model, as studied in Section 4 of \cite{chung2003generalizations}. 

\section{Problem formulation}\label{sec:model} 
In this section, we first describe the setup of the RG model. Then, the Polya process is introduced. Finally, we show that every state of the model maps to a state of the Polya process.

Our main object of study is the retweet graph $G=(V,E)$, which is a graph of users that have participated in the discussion of a specific topic. A directed edge $e=(u,v)$ indicates that user $v$ has retweeted  a tweet of $u$. We observe the retweet graph at the time instances $t=0,1,2,\ldots$, where either a new node or a new edge is added to the graph, and we denote by $G_t=(V_t,E_t)$ the retweet graph at time $t$. As usual, the out- (in-) degree of node $u$ is the number of directed edges with source (destination) in $u$. For every new message initiated by a new user $u$ a tree $H_u$ is formed. Then, ${\cal T}_t$ denotes the forest of message trees. Note that in the RG model, a new message from an already existing user $u$ (that is, $u\in {\cal T}_t$) does not initiate a new message tree. We define $|{\cal T}_t|$ as the number of new users that have started a message tree up to time $t$. I.e. $G_t$ can be seen as an simple representation of the union of message trees $\cup_{H_u\in{\cal T}_t}H_u$.

The goal of the model is to capture the development of trending behaviour. We do this by combining the spread of several messages. As a result of this approach, we first need to model the progression of a single message in the network. To this end, we use the superstar model of Bhamidi et al. \cite{bhamidi2012twitter} for modelling distinct components of the retweet graph, to which we add the mechanism for new components to arrive and the existing  components to merge. In this paper, our aim is to analyse the component size distribution of  $G_t$. For the sake of simplicity of the model, we neglect the friend-follower network of \emph{Twitter}. Note that in \emph{Twitter} every user can retweet any message sent by any public user, which supports our simplification. 

We consider the evolution of the retweet graph in time $(G_t)_{t\ge 0}$.  We use a subscript $t$ to indicate $G_t$ and related notions at time $t$. We omit the index $t$ when referring to the graph at $t\rightarrow\infty$. Let $G_0$ denote the graph at the start of the progression. In the analysis of this paper, we assume $G_0$ consists of a single node. Note that in reality, this does not need to be the case: any directed graph can be used as an input graph $G_0$. 

Recall that $G_t$ is a graph of {\it users}, and an edge $(u,v)$ means that $v$ has retweeted a tweet of $u$. We consider  time instances $t=1,2,\ldots$ when either a new node or a new edge is added to the graph $G_{t-1}$. We distinguish three types of changes in the retweet graph:

\begin{itemize}
\item[$\circ$] $T_1$: a new user $u$ has posted a new message on the topic, so node $u$ is added to $G_{t-1}$;
\item[$\circ$] $T_2$: a new user $v$ has retweeted an existing user $u$,  so node $v$ and edge $(u,v)$ are added to $G_{t-1}$;
\item[$\circ$] $T_3$: an existing user $v$ has retweeted another existing user $u$, so edge $(u,v)$ is added to $G_{t-1}$.
\end{itemize}

Note that the initial node in $G_0$ is equivalent to a $T_1$ arrival at time $t=0$. Assume that each change in $G_t$ at $t=1,2,\ldots$ is $T_1$ with probability $\lambda/(1+\lambda)$, independently of the past. Also, assume that a new edge (retweet) is coming from a new user with probability $p$. Then, the probabilities of $T_1$, $T_2$ and $T_3$ arrivals are, $\frac{\lambda}{\lambda+1}$, $\frac{p}{\lambda+1}$, $\frac{1-p}{\lambda+1}$ respectively. The parameter $p$ governs the process of components merging together, while $\lambda$ governs the arrival of new components in the graph. 

For both $T_2$ and $T_3$ arrivals we define the same mechanism to choose the source of the new edge $(u,v)$ as follows. 

Let $u_0,u_1,\ldots$ be the users that have been added to the graph as $T_1$ arrivals, where $u_0$ is the initial node. Denote by $H_{i,t}$ the subgraph of $G_t$ that includes $u_i$ and all users that have retweeted the message of $u_i$ in the interval $(0,t]$. We call such a subgraph a message tree with root $u_i$.  We assume that the probability that a $T_2$ or $T_3$ arrival at time $t$ will attach an edge to one of the nodes in $H_{i,t-1}$ with probability $p_{H_{i,t-1}}$ is proportional to the size of the message tree:
\[ p_{H_{i,t-1}} = \frac{|H_{i,t-1}|}{\sum_{H_{j,t-1}\subset{\cal T}_{t-1}}|H_{j,t-1}|}. \]
This creates a preferential attachment mechanism in the formation of the message trees. 

For the selection of the source node, we use the superstar model, with parameter $q$ chosen uniformly for all message trees. This model was suggested in \cite{bhamidi2012twitter} for modelling the largest connected component of the retweet graph on a given topic, in order to describe a progression mechanism for a single retweet tree. Our extensions compared to \cite{bhamidi2012twitter} are that we allow new message trees to appear ($T_1$ arrivals), and that different message trees may either remain disconnected or get connected by a $T_3$ arrival.  

For a $T_3$ arrival, the target of the new edge $(u,v)$ is chosen uniformly at random from $V_{t-1}$, with the exception of the earlier chosen source node $u$, to prevent self-loops. That is, any user is equally likely to retweet a message from another existing user. Thus, after a $T_3$ arrival a message tree can have cycles.

Note that we do not include tweets and retweets that do not result in new nodes or edges in the retweet graph. This could be done, for example, by introducing dynamic weights of vertices and edges, that increase with new tweets and retweets. Here, we consider only an unweighted model.
\subsection*{Polya process}
In our analysis of the previously stated model, we use the Polya process, which is defined in \cite{chung2003generalizations} as follows:
\begin{center}\parbox{0.8\textwidth}{Given two parameters $\gamma\in\mathbb{R}, 0\leq \bar{p} \leq 1$, we start with one bin, containing one ball. We then introduce balls one at a time. For each new ball, with probability $\bar{p}$, we create a new bin and place the ball in that bin; with probability $1-\bar{p}$, we place the ball in an existing bin (of size $m$), such that the probability that the ball is placed in a bin is proportional to $m^\gamma$.}\end{center}
We only consider the case where $\gamma=1$ in this paper. 

Let $f_{i,t}$ denote the fraction of bins that contain $i$ balls at time $t$. In \cite{chung2003generalizations}, the authors find that under the following assumptions:
\begin{enumerate}[i)]
\item for each $i$, there exists $f_i\in\mathbb{R}^+$ s.t. a.s. $\lim_{t\rightarrow\infty}f_{i,t}$ exists and is equal to $f_i$,
\item a.s. $\lim_{t\rightarrow\infty}\sum_{j=1}^\infty f_{j,t}j^\gamma$ exists, is finite, and is equal to $\sum_{j=1}^\infty f_jj^\gamma$.
\end{enumerate}
The limit of the fraction of bins that contain $i$ balls (denoted by $f_i$) satisfies
\[ f_i \propto i^{-\left(1+1/\left(1-\bar{p}\right)\right)}. \]

\subsection*{Mapping from retweet graph to balls and bins}
In this section, we show that every retweet graph $G_t$ can be mapped to a state of the Polya process, with $\gamma=1$ and $\bar{p}=\frac{\lambda}{\lambda+1}$.

\begin{lem}
Every retweet graph $G_t$ can be represented as a state $S$ of the Polya process.
\label{lem:equivalence}\end{lem}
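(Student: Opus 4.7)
The plan is to construct an explicit correspondence $\Phi$ between the state $G_t$ of the RG model and the state $S_t$ of a Polya process with parameters $\gamma = 1$ and $\bar{p} = \lambda/(\lambda+1)$, and then verify that one step of the RG model, conditioned on $G_{t-1}$, has the same distribution under $\Phi$ as one step of the Polya process conditioned on $S_{t-1}$. Then Lemma~\ref{lem:equivalence} follows by induction on $t$.

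The correspondence is the natural one: each message tree $H_{i,t} \in {\cal T}_t$ is sent to a bin of $S_t$, and the number of balls in that bin equals the size $|H_{i,t}|$ used in the preferential-attachment formula. The initial state is consistent since $G_0$ consists of the single node $u_0$ (the $T_1$ event at $t=0$), which matches the initial Polya state of one bin containing one ball. For the inductive step I would case-split on the event type. A $T_1$ arrival happens with probability $\lambda/(\lambda+1) = \bar{p}$ and initiates a new message tree of size $1$; under $\Phi$ this is exactly the Polya action ``create a new bin and place one ball in it''. A $T_2$ or $T_3$ arrival happens with combined probability $1/(\lambda+1) = 1 - \bar{p}$, and both use the same source-selection rule, attaching an edge to $H_{i,t-1}$ with probability $|H_{i,t-1}| / \sum_j |H_{j,t-1}|$. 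In either case the size of the affected tree grows by $1$, which under $\Phi$ is exactly ``place a ball in an existing bin of size $m$ with probability proportional to $m^\gamma$'' for $\gamma = 1$. Because the three event probabilities add to one on each side and every ball-placement matches a tree-update, the coupling propagates and $\Phi(G_t) = S_t$ for all $t \geq 0$.

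The main obstacle, in my view, is pinning down the bookkeeping for $T_3$: an existing user $v$ retweeting $u \in H_{i,t-1}$ does not add a new node to $G_t$, so one must say unambiguously how $|H_{i,t}|$ changes (and how $v$'s prior membership in some $H_{j,t-1}$ is handled). The reading consistent with the preferential-attachment formula is that $|H_{i,t}|$ counts the number of tweeting events contributing to the tree --- one for the root $u_i$ plus one for each subsequent retweet aimed at this tree, whether by a new user ($T_2$) or an existing user ($T_3$). Under this definition each of $T_1, T_2, T_3$ contributes exactly one unit to exactly one tree's size, so $\sum_j |H_{j,t}| = t+1$, matching the deterministic ball-count of the Polya process at step $t$. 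Once this is fixed, the probabilistic check reduces to comparing the two transition kernels, which agree term by term, completing the proof.
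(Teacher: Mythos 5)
The statement you are asked to prove is much weaker than what you set out to establish: Lemma~\ref{lem:equivalence} only asserts that the \emph{state} $G_t$ can be represented as a Polya state, and the paper's proof is a one-step static construction --- map each connected component $C_i$ of $G_t$ to a bin containing $|C_i|$ balls. No induction and no transition-probability check is needed at this stage; the dynamical correspondence is deferred to Theorem~\ref{thm:equivalence}, and crucially is claimed there \emph{only for $p=1$}. Your proposal is really an attempt at that stronger dynamical equivalence, for general $p$, and this is where it breaks. Note also that your bins are message trees $H_{i,t}$ rather than connected components; these coincide only when $p=1$ (no $T_3$ arrivals). The quantity the paper ultimately needs (in Theorem~\ref{thm:dist_comp}) is the distribution of \emph{component} sizes, so the components-to-bins map is the one that must be used, and with that map a $T_3$ arrival can \emph{merge two bins} --- an operation that simply does not exist in the Polya process. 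This is the reason the paper restricts the equivalence theorem to $p=1$ and handles general $p$ by a separate rate computation rather than by a coupling.

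Your attempted repair for $T_3$ --- redefining $|H_{i,t}|$ as a count of tweeting events so that every arrival adds exactly one unit to exactly one tree --- does make the tree-size bookkeeping self-consistent, but it changes the object being tracked: the resulting ``bin sizes'' are no longer component sizes (one component may absorb many trees), so the power law you would extract from the Polya limit would describe message-tree sizes, not the component-size distribution the paper is after. So for the Lemma itself, keep only the first half of your argument (the static assignment, preferably of components to bins, with the base case $G_0$ mapping to one bin with one ball); the step-by-step probability comparison belongs to Theorem~\ref{thm:equivalence} and is valid only in the $p=1$ case where $T_3$ arrivals are absent.
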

\begin{proof}
\setlength{\unitlength}{1cm}
\begin{figure}[!ht]
\subfloat[$G_t$\label{fig:G_t}]{\centering
\begin{picture}(6,5)
{\color{green} \put(2,4){\circle*{.5}}}\put(2,4){\circle{.5}}
{\color{green} \put(1,3.5){\circle*{.5}}}\put(1,3.5){\circle{.5}}
{\color{green} \put(1,4.5){\circle*{.5}}}\put(1,4.5){\circle{.5}}
{\color{green} \put(3,3.5){\circle*{.5}}}\put(3,3.5){\circle{.5}}
{\color{green} \put(3,4.5){\circle*{.5}}}\put(3,4.5){\circle{.5}}
\put(2.2,3.8){\vector(3,-2){0.5}}\put(1.8,3.8){\vector(-3,-2){0.5}}\put(2.2,4.2){\vector(3,2){0.5}}\put(1.8,4.2){\vector(-3,2){0.5}}

{\color{blue} \put(2.5,1){\circle*{.5}}}\put(2.5,1){\circle{.5}}
{\color{blue} \put(.5,2){\circle*{.5}}}\put(.5,2){\circle{.5}}
{\color{blue} \put(1.5,2){\circle*{.5}}}\put(1.5,2){\circle{.5}}
{\color{blue} \put(1.5,1){\circle*{.5}}}\put(1.5,1){\circle{.5}}
\put(1.2,2){\vector(-1,0){0.4}}\put(1.5,1.3){\vector(0,1){0.4}}\put(1.8,1){\vector(1,0){0.4}}

{\color{red} \put(3.5,2){\circle*{.5}}}\put(3.5,2){\circle{.5}}
{\color{red} \put(4.5,2){\circle*{.5}}}\put(4.5,2){\circle{.5}}
\put(3.8,2){\vector(1,0){0.4}}

{\color{yellow} \put(4.5,4){\circle*{.5}}}\put(4.5,4){\circle{.5}}
{\color{yellow} \put(4.5,3){\circle*{.5}}}\put(4.5,3){\circle{.5}}
\put(4.5,3.7){\vector(0,-1){0.4}}
\end{picture}
}
\subfloat[$S$\label{fig:polya_state}]{\centering
\begin{picture}(6,5)
\put(1.5,3){\line(0,1){2}}\put(3.5,3){\line(0,1){2}}\put(1.5,3){\line(1,0){2}}
{\color{green} \put(2.1,3.6){\circle*{.5}}}\put(2.1,3.6){\circle{.5}}
{\color{green} \put(3,4.3){\circle*{.5}}}\put(3,4.3){\circle{.5}}
{\color{green} \put(3.1,3.6){\circle*{.5}}}\put(3.1,3.6){\circle{.5}}
{\color{green} \put(1.9,4.4){\circle*{.5}}}\put(1.9,4.4){\circle{.5}}
{\color{green} \put(2.5,4.8){\circle*{.5}}}\put(2.5,4.8){\circle{.5}}

\put(4.5,3){\line(0,1){2}}\put(6.5,3){\line(0,1){2}}\put(4.5,3){\line(1,0){2}}
{\color{yellow} \put(5.1,3.6){\circle*{.5}}}\put(5.1,3.6){\circle{.5}}
{\color{yellow} \put(5.8,4.3){\circle*{.5}}}\put(5.8,4.3){\circle{.5}}

\put(4.5,0){\line(0,1){2}}\put(6.5,0){\line(0,1){2}}\put(4.5,0){\line(1,0){2}}
{\color{red} \put(5.1,.6){\circle*{.5}}}\put(5.1,.6){\circle{.5}}
{\color{red} \put(5.8,1.3){\circle*{.5}}}\put(5.8,1.3){\circle{.5}}

\put(1.5,0){\line(0,1){2}}\put(3.5,0){\line(0,1){2}}\put(1.5,0){\line(1,0){2}}
{\color{blue} \put(2.1,.6){\circle*{.5}}}\put(2.1,.6){\circle{.5}}
{\color{blue} \put(2.8,1.3){\circle*{.5}}}\put(2.8,1.3){\circle{.5}}
{\color{blue} \put(2.1,1.6){\circle*{.5}}}\put(2.1,1.6){\circle{.5}}
{\color{blue} \put(2.9,.4){\circle*{.5}}}\put(2.9,.4){\circle{.5}}
\end{picture}
}
\caption{Mapping from retweet graph $G_t$ (a) to Polya process state $S$ (b).\label{fig:mapping}}
\end{figure}
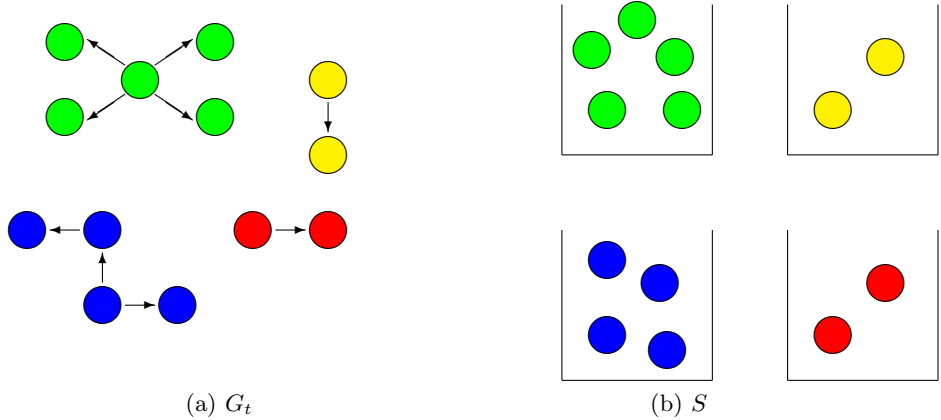
Suppose we have a retweet graph $G_t$, that consists of $k$ components of known sizes, moreover $G_t=\left\{C_1,C_2,\ldots,C_k\mid|C_1|,|C_2|,\ldots,|C_k|\right\}$. For instance, in Figure~\ref{fig:G_t} $G_t=\{C_{green},C_{yellow},C_{blue},C_{red}\mid |C_{green}|=5,|C_{yellow}|=2,|C_{blue}|=4,|C_{red}|=2\}$. First, we take $C_{green}$ that consists of five nodes and fill a bin with five green balls. Then, we take $C_{yellow}$, and fill a bin with $|C_{yellow}|=2$ yellow balls. Next, we take $C_{blue}$ and fill a bin with $|C_{blue}|=4$ blue balls. Finally, we take $C_{red}$ and fill a bin with $|C_{red}=2|$ red balls. These four bins with their corresponding balls then form a state $S$ of the Polya process, depicted in Figure~\ref{fig:polya_state}. Note that by using this procedure for an arbitrary graph $G_t$, we can always construct a state $S$.
\end{proof}

\subsection*{A special case: $p=1$}
In this subsection, we show that the RG model with parameter $p=1$ is equivalent to the Polya proces w.p. $\gamma=1,\bar{p}=\frac{\lambda}{\lambda+1}$. We use this to find the limiting distribution of the component sizes of the RG model.
\begin{thm}
RG model w.p. $p=1$ is equivalent to a Polya process w.p. $\gamma=1,\bar{p}=\frac{\lambda}{\lambda+1}$
\label{thm:equivalence}\end{thm}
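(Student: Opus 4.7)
The plan is to combine the static mapping from Lemma~\ref{lem:equivalence} with a dynamic matching of transition probabilities. At any time $t$, identify $G_t$ with the Polya state $S_t$ produced by the recipe in Lemma~\ref{lem:equivalence}: one bin per connected component, with ball count equal to component size. It then suffices to show that, conditional on $S_t$, the next Polya move has the same distribution as the next RG change translates to under the mapping.

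The key observation is that setting $p=1$ kills $T_3$ arrivals, so $\prob{T_3}=(1-p)/(\lambda+1)=0$. Hence the only possible changes from $G_{t-1}$ to $G_t$ are $T_1$ (which creates a brand-new isolated node) and $T_2$ (which attaches a new node to an existing component, enlarging it by one). Under the bin mapping, $T_1$ corresponds exactly to creating a new bin containing a single ball, and $T_2$ corresponds to placing a new ball into one of the existing bins. So the qualitative move types already match those of the Polya process.

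Next, I will check that the probabilities line up. A $T_1$ arrival occurs with probability $\lambda/(\lambda+1)=\bar p$, which is exactly the Polya probability of opening a new bin. A $T_2$ arrival occurs with probability $p/(\lambda+1)=1/(\lambda+1)=1-\bar p$, matching the Polya probability of reinforcing an existing bin. Conditional on a $T_2$ arrival, the RG model selects the target component $H_{i,t-1}$ with probability $|H_{i,t-1}|/\sum_j|H_{j,t-1}|$, i.e.\ proportional to component size; this is exactly the Polya rule with $\gamma=1$, since $m^\gamma=m$. Therefore, conditional on $S_{t-1}$, the distribution of $S_t$ under the RG dynamics coincides with that under the Polya dynamics.

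The one thing to be careful about is that the superstar mechanism still governs \emph{which} vertex in the chosen component becomes the source of the new edge, so the RG model retains richer internal structure than the Polya process records. However, this intra-component choice does not alter the \emph{size} of any component, so it is invisible under the mapping of Lemma~\ref{lem:equivalence}. I expect this to be the only conceptual subtlety; the remainder is a direct bookkeeping of the three probabilities $\lambda/(\lambda+1)$, $p/(\lambda+1)$, $(1-p)/(\lambda+1)$ with $p=1$, together with the size-proportional attachment rule. An induction on $t$, with base case $G_0=S_0$ equal to one node/one ball, then promotes the step-wise distributional equality to equivalence of the two processes.
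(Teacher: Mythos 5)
Your proposal is correct and follows essentially the same route as the paper's proof: map $G_t$ to a Polya state via Lemma~\ref{lem:equivalence}, observe that $p=1$ eliminates $T_3$ arrivals, and match $\prob{T_1}=\lambda/(\lambda+1)=\bar p$ with bin creation and the size-proportional $T_2$ attachment with the $\gamma=1$ Polya reinforcement rule. Your explicit remarks that the superstar choice of source vertex is invisible under the size mapping and that an induction on $t$ lifts the one-step match to process equivalence are sensible additions, but do not change the substance of the argument.
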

\begin{proof}
From Lemma~\ref{lem:equivalence} we know that a retweet graph $G_t$ can be mapped to a state $S$ of the Polya process. Next we show that the probability distribution of an arrival to $G_t$ is identical to the probability distribution of an addition of a ball to the state $S$ in the Polya process, given $p=1$ in the RG model. 

Since $p=1$, we have two types of arivals, $T_1$ and $T_2$. First, we consider a $T_1$ arrival
\begin{equation}
\prob{T_1\mbox{ arrival}} = \frac{\lambda}{\lambda+1} = \bar{p} = \prob{\mbox{new bin is created}}.
\end{equation}
Then, for a $T_2$ arrival, a new node is added to an existing component. The probability that a new node arrives to component $i$ in $G_t$ is
\begin{equation}
\prob{\mbox{arrival to component }i\mbox{ in }G_t} = \frac{|C_i|}{|V_t|}.
\label{eq:rg_prob_dist}\end{equation}
For the Polya process, the probability that a new ball arrives in bin $i$ in $S$ is as follows
\begin{equation}
\prob{\mbox{arrival to bin }i\mbox{ in }S} = \frac{X_i^\gamma}{\sum_{j=1}^kX_j^\gamma} = \frac{X_i}{\sum_{j=1}^kX_j}.
\label{eq:polya_prob_dist}\end{equation}

Using these equations and Lemma~\ref{lem:equivalence}, we find that
\begin{eqnarray*}
\prob{\mbox{arrival to component }i\mbox{ in }G_t} &=& \frac{|C_i|}{|V_t|} = \frac{|C_i|}{\sum_{j=1}^k|C_j|} = \frac{X_i}{\sum_{j=1}^kX_j}, \\
& = & \prob{\mbox{arrival to bin }i\mbox{ in }S}.
\end{eqnarray*}
Thus, the probability distribution of an arrival to the RG model w.p. $p=1$ is identical to the probability distribution of the Polya process w.p. $\gamma=1,\bar{p}=\frac{\lambda}{\lambda+1}$. In combination with Lemma~\ref{lem:equivalence}, we conclude that the RG model is equivalent to a Polya process with parameters $\gamma=1$ and $\bar{p}=\frac{\lambda}{\lambda+1}$.
\end{proof}
From this equivalence, we immediately obtain the limiting component size distribution of the RG model from \cite{chung2003generalizations}, given $p=1$.
\begin{cor}\label{cor:dist_comp}
For $f_i$, the fraction of components of size $i$, it holds that
\[  f_i  \propto i^{-\left(\lambda+2\right)}, \]
for the RG model w.p. $p=1$.
\end{cor}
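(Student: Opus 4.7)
The plan is to invoke the equivalence of Theorem~\ref{thm:equivalence} and combine it with the power-law result for the Polya process that is quoted from \cite{chung2003generalizations} right before the mapping subsection. Since the RG model with $p=1$ is (in distribution) a Polya process with $\gamma=1$ and $\bar p=\lambda/(\lambda+1)$, and since by the mapping in Lemma~\ref{lem:equivalence} every component of $G_t$ corresponds to a bin of the Polya state $S$ with cardinality equal to the component size, the fraction $f_i$ of components of size $i$ in $G_t$ is identically distributed to the fraction of bins of size $i$ in $S$. Hence it suffices to compute the Polya exponent at $\bar p=\lambda/(\lambda+1)$.

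Concretely, I would first state the identification $f_i = f_{i,t}^{\text{Polya}}$ in the limit $t\to\infty$ as a direct consequence of Theorem~\ref{thm:equivalence}, and then apply the cited result
\[ f_i \propto i^{-(1+1/(1-\bar p))}. \]
Plugging $\bar p = \lambda/(\lambda+1)$ gives $1-\bar p = 1/(\lambda+1)$, so $1/(1-\bar p)=\lambda+1$ and the exponent is $1+(\lambda+1)=\lambda+2$, yielding $f_i\propto i^{-(\lambda+2)}$ as claimed.

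The one substantive thing to check is that the two assumptions (i) and (ii) of the Polya limit theorem hold for our instantiation with $\gamma=1$. Assumption (i), the almost sure existence of the limit $f_i = \lim_{t\to\infty} f_{i,t}$, follows because the Polya process with $\gamma=1$ is exactly the classical linear preferential attachment bin process, for which these limits are known to exist almost surely. Assumption (ii), that $\sum_j f_{j,t}\,j$ converges almost surely to $\sum_j f_j\,j$ and is finite, is immediate for $\gamma=1$ because $\sum_j f_{j,t}\,j$ is simply the ratio of the total number of balls to the total number of bins at time $t$, and by the law of large numbers this ratio converges almost surely to $1/\bar p = (\lambda+1)/\lambda$, which is finite.

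The main (and essentially only) obstacle is bookkeeping: making sure that the correspondence ``component $\leftrightarrow$ bin, node $\leftrightarrow$ ball'' is applied to the right quantity so that $f_i$ for the RG model matches $f_i$ for the Polya process, and that the verification of assumption (ii) uses the right normalization in the $\gamma=1$ case. Once those are in place, the corollary is just the algebraic substitution above.
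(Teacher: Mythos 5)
Your proof is correct and follows the same route as the paper, which obtains the corollary immediately from Theorem~\ref{thm:equivalence} together with the quoted Polya-process exponent $1+1/(1-\bar{p})$ evaluated at $\bar{p}=\lambda/(\lambda+1)$. Your additional verification of assumptions (i) and (ii) is sound extra diligence that the paper itself omits, but it does not change the argument's structure.
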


\section{Component size distribution}\label{sec:theory}
Using the fact that the RG model can be mapped to a Polya process, we derive the limiting behaviour of the component size distribution for general $p$.

\begin{thm}\label{thm:dist_comp}
For the RG model, the limit $f_i$ of the fraction of components of size $i$ a.s. satisfies
\[  f_i  \propto i^{-\left(1+\frac{\lambda+1}{p}\right)}. \]
\end{thm}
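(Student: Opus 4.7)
The strategy is to extend the Polya-process correspondence of Theorem~\ref{thm:equivalence} to general $p$. Lemma~\ref{lem:equivalence} still provides a ball-and-bin representation of each $G_t$, so the task is to identify the effective Polya parameter $\bar p$ for the one-step dynamics and then invoke the formula $f_i \propto i^{-(1 + 1/(1-\bar p))}$ from \cite{chung2003generalizations}. Matching the claimed exponent $1 + (\lambda+1)/p$ pinpoints $\bar p = 1 - p/(\lambda+1)$; equivalently, the ``add-ball-to-existing-bin'' event should correspond to a T2 arrival (whose source is chosen with probability proportional to component size, which matches $\gamma = 1$), while T1 and T3 arrivals together play the role of ``new-bin'' events in the effective Polya correspondence.

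Concretely, I would write out the one-step master equation for $\mathbb{E}[N_i(t+1) - N_i(t) \mid \mathcal{F}_t]$, splitting it into the three event-type contributions. T1 contributes a constant source at $i = 1$; T2 produces a preferential-attachment flow from $N_{i-1}$ to $N_i$; and T3 contributes both a destruction term proportional to $i N_i / |V_t|$ and a nonlinear coagulation-type creation term $\sum_{j=1}^{i-1} j(i-j) N_j N_{i-j}/|V_t|^2$ coming from merges. Rescaling to $n_i = \lim_{t\to\infty} N_i(t)/|V_t|$ and performing a singularity analysis of the generating function $g(z) = \sum_{i} n_i z^i$ near $z = 1$ with a singular ansatz of the form $g(z) \sim g(1) - (1-z) + c(1-z)^{\alpha-1}$, one matches the leading-order singular terms of the resulting ODE and extracts $\alpha = 1 + (\lambda+1)/p$.

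The principal obstacle is the nonlinear coagulation term introduced by T3, which has no direct analogue in the standard Polya process: the crux of the proof is showing that its leading-order contribution to the power-law tail is captured exactly by the re-parameterisation $\bar p = 1 - p/(\lambda+1)$, rather than modifying the tail's functional form. A secondary step is verifying the two assumptions of \cite{chung2003generalizations} -- the a.s.\ existence of the limits $f_i$ and the convergence of $\sum_j f_j j$ -- which follow from standard stochastic-approximation and martingale-convergence arguments, provided the mean component size is finite, i.e.\ once $\alpha > 2$ is in hand.
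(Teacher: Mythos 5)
Your second paragraph reproduces exactly the first half of the paper's argument: the paper also writes the one-step balance for the probability that an arrival produces a component of size $i+1$, split by arrival type, with $T_1$ acting as a source at size $1$, $T_2$ giving the preferential-attachment flow $\frac{f_{i,t}\, i}{\sum_j f_{j,t}\, j}\cdot\frac{p}{\lambda+1}$, and $T_3$ giving precisely the merge convolution you describe. The genuine gap is in what you do with that equation. The step you yourself call ``the crux'' --- showing that the $T_3$ coagulation term's contribution to the tail is captured by the re-parameterisation $\bar p = 1-p/(\lambda+1)$ --- is never carried out, and the route you sketch for it is not available: $\bar p$ is reverse-engineered from the exponent you are trying to prove, and the claim that ``$T_1$ and $T_3$ arrivals together play the role of new-bin events'' is false as a description of the dynamics, since a $T_3$ arrival never creates a new component; it either adds an internal edge or merges two existing bins, an operation that has no counterpart in the Polya urn of \cite{chung2003generalizations}. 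The theorem of Chung et al.\ therefore cannot simply be invoked with a modified $\bar p$; the merge term has to be analysed head-on, and a proposal that defers exactly that analysis has deferred the entire content of the theorem beyond the $p=1$ case already settled by Corollary~\ref{cor:dist_comp}.

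For comparison, the paper closes this by forming the difference $p_{i-1}-p_i$, in which the $T_3$ convolution terms telescope and survive only with a negative sign, namely $-\frac{1-p}{\lambda+1}\left(\frac{f_i i}{\overline{C}}\cdot\frac{2i}{\overline{C}(\overline{C}-1)}+\sum_{k=1}^{i-1}\frac{f_k k}{\overline{C}}\cdot\frac{2k}{\overline{C}(\overline{C}-1)}\right)$; discarding part of this negative contribution yields a one-step recursive upper bound of the form $f_i \le \frac{p(i-1)}{(\lambda+1)c\overline{C}+pi+\cdots}\, f_{i-1}$, whose product is a ratio of Gamma functions $\propto i^{-(1+(\lambda+1)/p)}$ --- the same computation as for $p=1$, which is exactly why merging does not alter the exponent (and note the paper itself concedes this yields an upper bound rather than a two-sided $\propto$). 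Your generating-function alternative could in principle replace that last step, but it would not be a free substitution: for a tail with finite first moment the creation and destruction parts of the coagulation term cancel at leading order, since $\sum_{j+k=i} j n_j\, k n_{k} \approx 2\, i n_i \sum_j j n_j$, so the exponent is decided by a subleading balance whose outcome depends on the normalisation and on the growth rate of $|V_t|$; without exhibiting that matching you have no grounds to assert it lands on $\alpha = 1+(\lambda+1)/p$. The secondary step you mention (verifying assumptions (i) and (ii) of \cite{chung2003generalizations}) is also only asserted, but the paper assumes these as well, so that omission is on par; the unhandled merge term is the real defect.
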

\begin{proof}
Given the mapping from the retweet graph to the Polya process, we can derive $f_i$ for the RG model similar to the derivation in \cite{chung2003generalizations}. In that paper, the authors define $p_{i,t}$ as follows
\begin{equation} p_{i,t}:= \prob{\mbox{ball at time }t\mbox{ is placed in bin of size }i},\label{eq:def_p}\end{equation}
with the convention $p_{0,t}=\bar{p}$. Therefore, we can find an expression for $f_i$ by finding an expression for $p_{i,t}$ in the RG model and then following a similar line of reasoning as in \cite{chung2003generalizations}. Note that (\ref{eq:def_p}) is equal to 
\[ p_{i,t}:= \prob{\mbox{ball at time }t\mbox{ increases a bins size to }i+1}.\]
Rewriting this equation to the RG model it holds that, 
\[ p_{i,t} = \prob{\mbox{arrival at time }t\mbox{ results in a component of size }i+1}. \]
Then, let $T_x \rightarrow |C_n|=i+1$ denote a $T_x$ arrival to component $C_n$ augmenting its size to $i+1$ and let $T_3+C_o$ denote that a $T_3$ arrival connects to component $C_0$. We find that for $i\geq 2$
\begin{eqnarray*}
p_{i,t} &=& \prob{T_2\rightarrow|C_n|=i+1}\cdot\prob{T_2\mbox{ arrival}} + \prob{T_3\rightarrow|C_n|=i+1}\cdot\prob{T_3\mbox{ arrival}}, \\
&=& \prob{T_2\rightarrow|C_n|=i+1}\cdot\prob{T_2\mbox{ arrival}}, \\
&& + \sum_{k=1}^i\prob{T_3\rightarrow|C_n|=i+1\mid T_3+C_o,|C_o|=k}\cdot\prob{T_3\mbox{ arrival}}\cdot\prob{T_3+C_o\mid|C_o|=k}, \\
& = & \frac{f_{i,t}\cdot i}{\sum_{j=1}^\infty f_{j,t}\cdot j}\cdot\frac{p}{\lambda+1} + \sum_{k=1}^i \frac{f_{k,t}\cdot k}{\sum_{j=1}^\infty f_{j,t}\cdot j}\cdot\frac{1-p}{\lambda+1}\cdot\frac{2\cdot k\cdot\left(i-k+1\right)}{|V_t|^2-|V_t|}, \\
& = & \frac{f_{i,t}\cdot i}{\sum_{j=1}^\infty f_{j,t}\cdot j}\cdot\frac{p}{\lambda+1} + \sum_{k=1}^i \frac{f_{k,t}\cdot k}{\sum_{j=1}^\infty f_{j,t}\cdot j}\cdot\frac{1-p}{\lambda+1}\cdot\frac{2\cdot k\cdot\left(i-k+1\right)}{\left(\sum_{j=1}^\infty f_{j,t}\cdot j\right)^2-\sum_{j=1}^\infty f_{j,t}\cdot j},
\end{eqnarray*}
with the convention that $p_{0,t}=\prob{T_1}=\frac{\lambda}{\lambda+1}$. 

Then, let $f_{i,t}\rightarrow f_i$ a.s. as $t\rightarrow\infty$. Since the RG model can be mapped to the Polya process with parameter $\gamma=1$ by Lemma~\ref{lem:equivalence}, the limit $\sum_{j=1}^\infty f_j\cdot j$ exists and is equal the average bin size, which we will denote by $\overline{C}$. Thus, the aforementioned assumptions also hold for the RG model. Using this and defining $c=\frac{p_{i-1} - p_i}{f_i}$, we find
\begin{eqnarray*}
c\cdot f_i & = & p_{i-1} - p_i, \\
&=& \frac{f_{i-1}\cdot\left(i-1\right)}{\overline{C}}\cdot\frac{p}{\lambda+1} + \sum_{k=1}^{i-1} \frac{f_k\cdot k}{\overline{C}}\cdot\frac{1-p}{\lambda+1}\cdot\frac{2\cdot k\cdot\left(i-k\right)}{\overline{C}\cdot\left(\overline{C}-1\right)}, \\
&&- \left(\frac{f_i\cdot i}{\overline{C}}\cdot\frac{p}{\lambda+1} + \sum_{k=1}^i \frac{f_k\cdot k}{\overline{C}}\cdot\frac{1-p}{\lambda+1}\cdot\frac{2\cdot k\cdot\left(i-k+1\right)}{\overline{C}\cdot\left(\overline{C}-1\right)}\right), \\
&=& \frac{p}{\lambda+1}\cdot\frac{f_{i-1}\cdot\left(i-1\right)-f_i \cdot i}{\overline{C}} + \frac{1-p}{\lambda+1}\cdot\left(\sum_{k=1}^{i-1} \frac{f_k\cdot k}{\overline{C}}\cdot\frac{2\cdot k\cdot\left(i-k\right)}{\overline{C}\cdot\left(\overline{C}-1\right)} - \sum_{k=1}^i \frac{f_k\cdot k}{\overline{C}}\cdot\frac{2\cdot k\cdot\left(i-k+1\right)}{\overline{C}\cdot\left(\overline{C}-1\right)}\right), \\
&=& \frac{p}{\lambda+1}\cdot\frac{f_{i-1}\cdot\left(i-1\right)-f_i \cdot i}{\overline{C}} - \frac{1-p}{\lambda+1}\cdot\left(\frac{f_i\cdot i}{\overline{C}}\cdot\frac{2\cdot i}{\overline{C}\cdot\left(\overline{C}-1\right)} + \sum_{k=1}^{i-1} \frac{f_k\cdot k}{\overline{C}}\cdot\frac{2\cdot k}{\overline{C}\cdot\left(\overline{C}-1\right)}\right), \\
&\leq& \frac{p}{\lambda+1}\cdot\frac{f_{i-1}\cdot\left(i-1\right)-f_i\cdot i}{\overline{C}}-\frac{1-p}{\lambda+1}\cdot\frac{f_i\cdot i}{\overline{C}}\frac{2\cdot i}{\overline{C}\cdot\left(\overline{C}-1\right)}, \\
&=& \frac{p\cdot f_{i-1}\cdot\left(i-1\right)+\left[\left(\frac{2\cdot i}{\overline{C}\cdot\left(\overline{C}-1\right)}-1\right)\cdot p-\frac{2\cdot i}{\overline{C}\cdot\left(\overline{C}-1\right)}\right]\cdot f_i\cdot i}{\left(\lambda+1\right)\cdot\overline{C}}.
\end{eqnarray*}
And therefore, for $i\geq 2$ it holds that
\[ f_i \leq \frac{p\cdot \left(i-1\right)}{\left(\lambda+1\right)\cdot c\cdot\overline{C}-\left[\left(\frac{2\cdot i}{\overline{C}\cdot\left(\overline{C}-1\right)}-1\right)\cdot p-\frac{2\cdot i}{\overline{C}\cdot\left(\overline{C}-1\right)}\right]\cdot i}\cdot f_{i-1}. \]
Using this expression and defining $f_i\propto g(i)$ as $f_i = c(1+o(1))g(i)$, we find that
\begin{eqnarray} 
f_i &\leq& f_1\cdot\prod_{j=2}^i \frac{p\cdot \left(j-1\right)}{\left(\lambda+1\right)\cdot c\cdot\overline{C}-\left[\left(\frac{2\cdot j}{\overline{C}\cdot\left(\overline{C}-1\right)}-1\right)\cdot p-\frac{2\cdot j}{\overline{C}\cdot\left(\overline{C}-1\right)}\right]\cdot i}, \nonumber\\
&\propto& \prod_{j=2}^i \frac{p\cdot\left(j-1\right)}{\lambda+1+p\cdot j} = \prod_{j=2}^i \frac{j-1}{j+\frac{\lambda+1}{p}} \propto \frac{\Gamma\left(i\right)}{\Gamma\left(i+1+\frac{\lambda+1}{p}\right)} \propto i^{-\left(1+\frac{\lambda+1}{p}\right)}. 
\label{eq:result}\end{eqnarray}
which indicates power-law behaviour. 
\end{proof}

\subsection*{Validation of results}
To validate these results, we ran multiple simulations using our model and plotted the probability density function (pdf) for each of these runs. We compare these simulations to the Yule distribution, with parameters $\alpha$ and $x_{\min}$. Since we are analysing the distribution of component sizes, it follows that $x_{\min}=1$. Then, from Equation~\ref{eq:result}, we find that $\alpha=\frac{\lambda+1}{p}+1$. Let $\Gamma(\cdot)$ denote the Gamma function, by \cite{clauset2009power}, the pdf of this distribution is as follows,
\begin{equation} 
f\left(x\right) = \left(\frac{\lambda+1}{p}\right)\frac{\Gamma\left(\frac{\lambda+1}{p}+1\right)}{\Gamma\left(1\right)}\frac{\Gamma\left(x\right)}{\Gamma\left(x+\frac{\lambda+1}{p}+1\right)}. 
\label{eq:pdf_yule}\end{equation}

In Figure~\ref{fig:simulation_results} we depict the results for several values of $p$ of the simulations with $t=100,000$, $\lambda=\frac{1}{3}$ and $q=0.9$ as values for the other parameters of the RG model. Also depicted in Figure~\ref{fig:simulation_results} as black squares are the values of the pdf of the Yule distribution, shown in Equation~\ref{eq:pdf_yule}. 

In Figure~\ref{fig:simulation_results} we see that for every run, there is one really large component. In \cite{thij2014modelling}, we named these components the Largest Connected Component (LCC) and we mention the fraction of nodes in the LCC in the legend of Figure~\ref{fig:simulation_results}. Note that all the values for the other component sizes are slightly below the Yule distribution. This fact supports our claim that the component size distribution shows power-law behaviour. These values are slightly below the Yule distribution, since Equation~\ref{eq:result} is an upper bound for $f_i$.

\begin{figure}[!ht]\centering
\subfloat[$p=0.4$\label{fig:results_p040}]{
\includegraphics[height=0.4\textwidth]{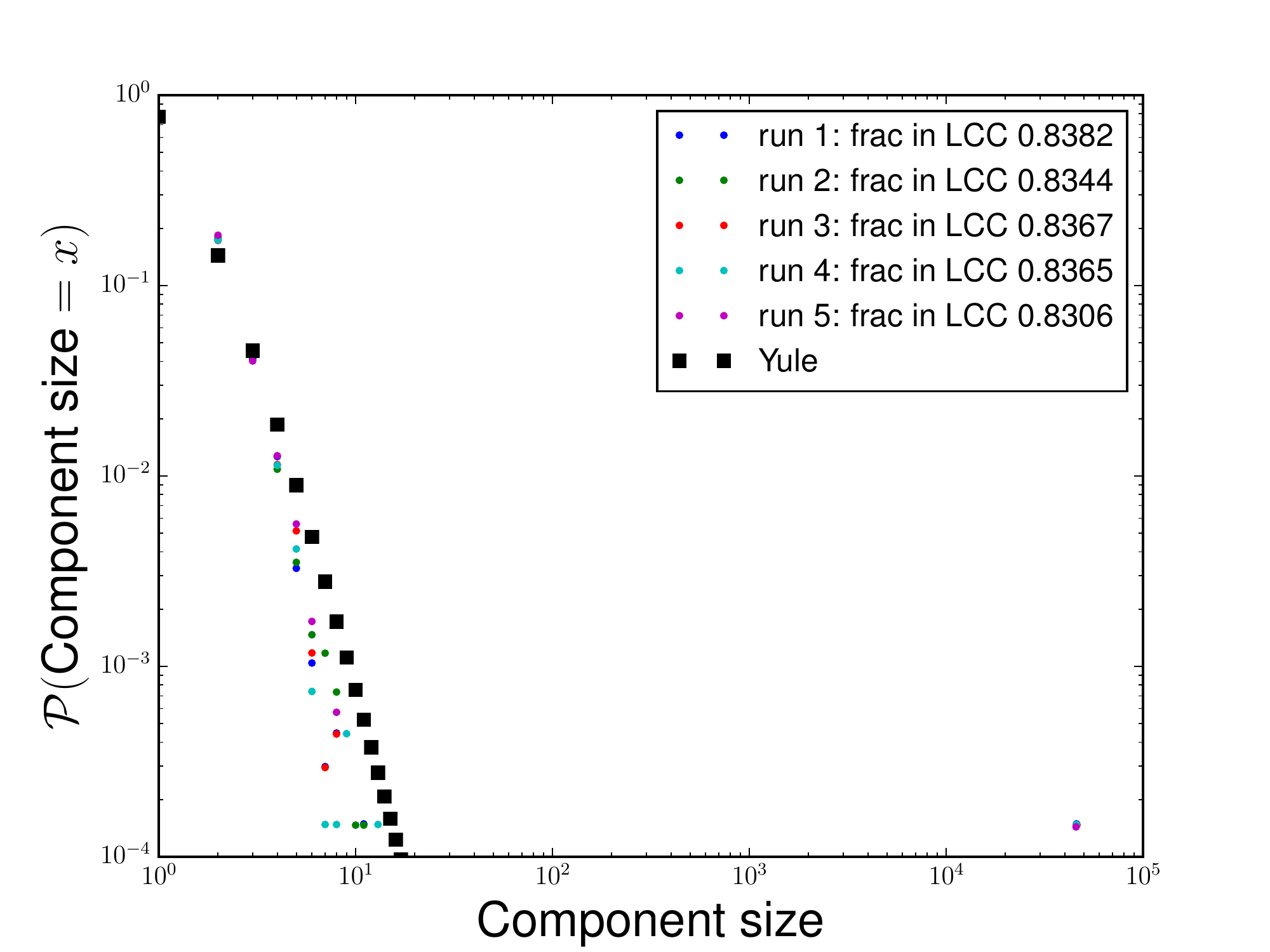}\label{fig:results_p040}}
\subfloat[$p=0.8$\label{fig:results_p080}]{
\includegraphics[height=0.4\textwidth]{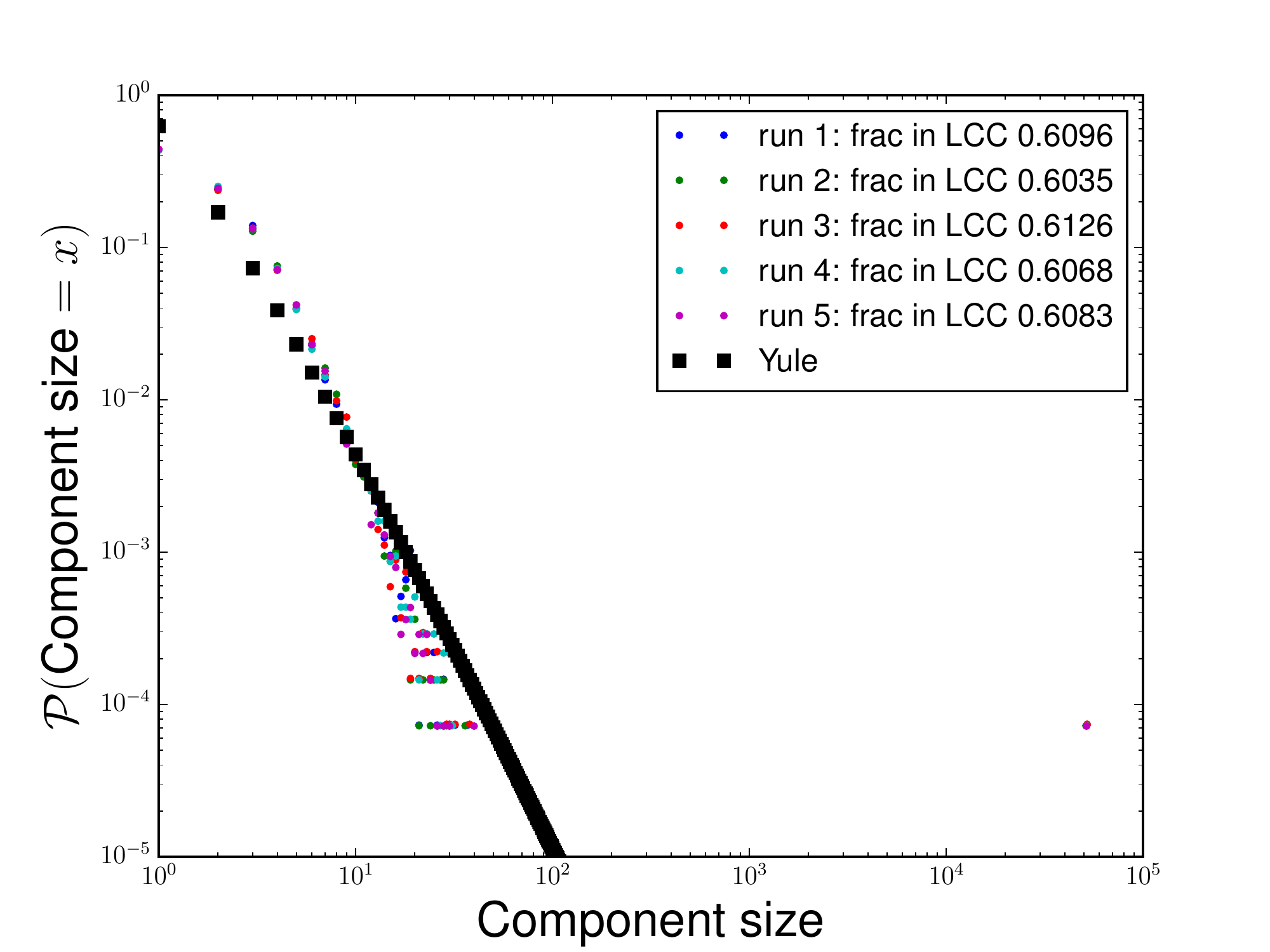}\label{fig:results_p080}}
\caption{Plots of the pdf of the component size distribution. For these simulations we used $t=100,000$, $\lambda=\frac{1}{3}$ and $q=0.9$.\label{fig:simulation_results}}
\end{figure}

\section{Conclusion and Discussion}\label{sec:conclusion}
In this paper, in which we have extended our previous work on the RG model, we derived the limiting behaviour of the component sizes. Through a mapping of the RG model to the Polya process, we found that the distribution of component sizes shows power-law behaviour. Note that Corrolary~\ref{cor:dist_comp} is identical to Theorem~\ref{thm:dist_comp} when we fill in $p=1$. Thus, interestingly enough, the possibility of merging components does not seem to affect this limiting behaviour greatly. The validation in this paper is based on simulated results, thus the model has currently not been tested to \emph{Twitter} datasets. 

Moreover, the model used in this paper can also easily be extended to a less general setting. For instance, the superstar parameter $q$ is assumed to be equal for every message tree. This can be easily extended to an individual superstar parameter per message tree $q_i$. This addition does not change the result of the analysis shown in this paper and therefore is not included here. Another aspect that could be taken into account in future work, is the weigth of an edge. Given that there are multiple retweets between $u$ and $v$, this can be taken into account by adding a weigth to every edge. A last extension that could prove to be interesting, is to use time-varying parameters. For simplicity, this has not been taken into account.

We plan to explore these aspects in our future research.

\bibliographystyle{abbrv}
\bibliography{csizebib}

\begin{thebibliography}{10}

\bibitem{altshuler2012trends}
Y.~Altshuler, W.~Pan, and A.~S. Pentland.
\newblock {Trends prediction using social diffusion models}.
\newblock In {\em Social computing, behavioral-cultural modeling and
  prediction}, pages 97--104. Springer, 2012.

\bibitem{bauckhage2015parameterizing}
C.~Bauckhage, K.~Kersting, and F.~Hadiji.
\newblock Parameterizing the distance distribution of undirected networks.
\newblock In {\em Proc. UAI}, 2015.

\bibitem{bhamidi2012twitter}
S.~Bhamidi, J.~M. Steele, and T.~Zaman.
\newblock {Twitter Event Networks and the Superstar Model}.
\newblock {\em arXiv preprint arXiv:1211.3090}, 2012.

\bibitem{bhattacharya2012sharing}
D.~Bhattacharya and S.~Ram.
\newblock Sharing news articles using 140 characters: A diffusion analysis on
  twitter.
\newblock In {\em Advances in Social Networks Analysis and Mining (ASONAM),
  2012 IEEE/ACM International Conference on}, pages 966--971. IEEE, 2012.

\bibitem{carton2015audience}
S.~Carton, E.~Adar, S.~Park, Q.~Mei, N.~Zeffer, and P.~Resnick.
\newblock Audience analysis for competing memes in social media.
\newblock In {\em Ninth International AAAI Conference on Web and Social Media},
  2015.

\bibitem{chung2003generalizations}
F.~Chung, S.~Handjani, and D.~Jungreis.
\newblock {Generalizations of Polya's urn Problem}.
\newblock {\em Annals of combinatorics}, 7(2):141--153, 2003.

\bibitem{clauset2009power}
A.~Clauset, C.~R. Shalizi, and M.~E. Newman.
\newblock Power-law distributions in empirical data.
\newblock {\em SIAM review}, 51(4):661--703, 2009.

\bibitem{ewens2012mathematical}
W.~J. Ewens.
\newblock {\em Mathematical Population Genetics 1: Theoretical Introduction},
  volume~27.
\newblock Springer Science \& Business Media, 2012.

\bibitem{Ferrara2013}
E.~Ferrara, M.~JafariAsbagh, O.~Varol, V.~Qazvinian, F.~Menczer, and
  A.~Flammini.
\newblock {Clustering memes in social media}.
\newblock In {\em Advances in Social Networks Analysis and Mining (ASONAM),
  2013 IEEE/ACM International Conference on}, pages 548--555. IEEE, 2013.

\bibitem{friggeri2014rumor}
A.~Friggeri, L.~Adamic, D.~Eckles, and J.~Cheng.
\newblock Rumor cascades, 2014.

\bibitem{gleeson2014simple}
J.~P. Gleeson, D.~Cellai, J.-P. Onnela, M.~A. Porter, and F.~Reed-Tsochas.
\newblock A simple generative model of collective online behavior.
\newblock {\em Proceedings of the National Academy of Sciences},
  111(29):10411--10415, 2014.

\bibitem{Guille2013}
A.~Guille, H.~Hacid, C.~Favre, and D.~A. Zighed.
\newblock {Information diffusion in online social networks: A survey}.
\newblock {\em ACM SIGMOD Record}, 42(2):17--28, 2013.

\bibitem{hoang2012virality}
T.-A. Hoang and E.-P. Lim.
\newblock {Virality and Susceptibility in Information Diffusions.}
\newblock In {\em ICWSM}, 2012.

\bibitem{iwata2013discovering}
T.~Iwata, A.~Shah, and Z.~Ghahramani.
\newblock Discovering latent influence in online social activities via shared
  cascade poisson processes.
\newblock In {\em Proceedings of the 19th ACM SIGKDD International Conference
  on Knowledge Discovery and Data Mining}, KDD '13, pages 266--274, New York,
  NY, USA, 2013. ACM.

\bibitem{kupavskii2012prediction}
A.~Kupavskii, L.~Ostroumova, A.~Umnov, S.~Usachev, P.~Serdyukov, G.~Gusev, and
  A.~Kustarev.
\newblock {Prediction of retweet cascade size over time}.
\newblock In {\em Proceedings of the 21st ACM international conference on
  Information and knowledge management}, pages 2335--2338. ACM, 2012.

\bibitem{lehmann2012dynamical}
J.~Lehmann, B.~Gon{\c{c}}alves, J.~J. Ramasco, and C.~Cattuto.
\newblock {Dynamical classes of collective attention in Twitter}.
\newblock In {\em Proceedings of the 21st international conference on World
  Wide Web}, pages 251--260. ACM, 2012.

\bibitem{lerman2010information}
K.~Lerman and R.~Ghosh.
\newblock {Information Contagion: An Empirical Study of the Spread of News on
  Digg and Twitter Social Networks.}
\newblock 2010.

\bibitem{Pemantle2006}
R.~Pemantle et~al.
\newblock {A survey of random processes with reinforcement}.
\newblock {\em Probab. Surv}, 4(0):1--79, 2007.

\bibitem{rattanaritnontstudy}
G.~Rattanaritnont, M.~Toyoda, and M.~Kitsuregawa.
\newblock {A Study on Relationships between Information Cascades and Popular
  Topics in Twitter}.
\newblock {\em DEIM forum}, 7(5):1--6, 2012.

\bibitem{redner1998popular}
S.~Redner.
\newblock How popular is your paper? an empirical study of the citation
  distribution.
\newblock {\em The European Physical Journal B - Condensed Matter and Complex
  Systems}, 4(2):131--134, 1998.

\bibitem{romero2011differences}
D.~M. Romero, B.~Meeder, and J.~Kleinberg.
\newblock {Differences in the mechanics of information diffusion across topics:
  idioms, political hashtags, and complex contagion on Twitter}.
\newblock In {\em Proceedings of the 20th international conference on World
  wide web}, pages 695--704. ACM, 2011.

\bibitem{sadikov2009information}
E.~Sadikov and M.~M.~M. Martinez.
\newblock {Information propagation on Twitter}.
\newblock {\em CS322 Project Report}, 2009.

\bibitem{simon1955class}
H.~A. Simon.
\newblock On a class of skew distribution functions.
\newblock {\em Biometrika}, pages 425--440, 1955.

\bibitem{thij2014modelling}
M.~ten Thij, T.~Ouboter, D.~Worm, N.~Litvak, H.~van~den Berg, and S.~Bhulai.
\newblock {Modelling of trends in Twitter using retweet graph dynamics}.
\newblock In {\em Algorithms and Models for the Web Graph}, pages 132--147.
  Springer, 2014.

\bibitem{watts2002simple}
D.~J. Watts.
\newblock A simple model of global cascades on random networks.
\newblock {\em Proceedings of the National Academy of Sciences},
  99(9):5766--5771, 2002.

\bibitem{wu2014prediction}
S.~Wu and L.~Raschid.
\newblock Prediction in a microblog hybrid network using bonacich potential.
\newblock In {\em Proceedings of the 7th ACM International Conference on Web
  Search and Data Mining}, WSDM '14, pages 383--392, New York, NY, USA, 2014.
  ACM.

\bibitem{yule1925mathematical}
G.~U. Yule.
\newblock A mathematical theory of evolution, based on the conclusions of dr.
  jc willis, frs.
\newblock {\em Philosophical Transactions of the Royal Society of London.
  Series B, Containing Papers of a Biological Character}, pages 21--87, 1925.

\bibitem{zaman2010predicting}
T.~R. Zaman, R.~Herbrich, and D.~Stern.
\newblock {Predicting Information Spreading in Twitter}.
\newblock In {\em Social Science and}, volume~55, pages 1--4. Citeseer, 2010.

\bibitem{zubiaga2011classifying}
A.~Zubiaga, D.~Spina, V.~Fresno, and R.~Mart{\'\i}nez.
\newblock {Classifying trending topics: a typology of conversation triggers on
  Twitter}.
\newblock In {\em Proceedings of the 20th ACM international conference on
  Information and knowledge management}, pages 2461--2464. ACM, 2011.

\end{thebibliography}

\flushend
\end{document}